\newenvironment{proof}{\noindent\textbf{Proof: }}{\mbox{}\hfill
    \makebox[0pt][l]{$\sqcup$}$\sqcap$\par\vspace{1ex} \vspace{9pt} }
\newtheorem{lemma}{Lemma}
\newtheorem{corollary}{Corollary}
\begin{document}

\begin{frontmatter}

% Title, authors and addresses

% use the thanksref command within \title, \author or \address for footnotes;
% use the corauthref command within \author for corresponding author footnotes;
% use the ead command for the email address,
% and the form \ead[url] for the home page:
% \title{Title\thanksref{label1}}
% \thanks[label1]{}
% \author{Name\corauthref{cor1}\thanksref{label2}}
% \ead{email address}
% \ead[url]{home page}
% \thanks[label2]{}
% \corauth[cor1]{}
% \address{Address\thanksref{label3}}
% \thanks[label3]{}

\title{Finding a Feasible Flow in a Strongly Connected Network}

\author[princeton]{Bernhard Haeupler} and
\author[princeton,hp]{Robert E. Tarjan}
\address[princeton]{Department of Computer Science, Princeton University, 35 Olden Street, Princeton, NJ 08540-5233\\ \{haeupler,ret\}@cs.princeton.edu}
\address[hp]{HP Laboratories, Palo Alto CA}
% use optional labels to link authors explicitly to addresses:
% \author[label1,label2]{}
% \address[label1]{}
% \address[label2]{}

\begin{abstract}
We consider the problem of finding a feasible single-commodity flow in a strongly connected network with fixed supplies and demands, provided that the sum of supplies equals the sum of demands and the minimum arc capacity is at least this sum. A fast algorithm for this problem improves the worst-case time bound of the Goldberg-Rao maximum flow method\cite{goldberg1998bfd} by a constant factor. Erlebach and Hagerup\cite{erlebach2002rft} gave an linear-time feasible flow algorithm. We give an arguably simpler one.
\end{abstract}

\begin{keyword}
Combinatorial algorithms \sep Network flow \sep Feasible flow \sep Strongly connected network \sep Maximum flow
\end{keyword}

\end{frontmatter}

Let $G$ be a directed network with vertex set $V$, arc set $E$, non-negative arc capacities $c(v, w)$, and real-valued vertex \emph{imports} $b(v)$. A \emph{supply} is a positive import; a \emph{demand} is the absolute value of a negative import. A \emph{flow} on $G$ is a non-negative real-valued function on the arcs such that the flow on each arc does not exceed its capacity. Given a flow $f$, the \emph{balance} at a vertex $v$ is $e_f(v) = b(v) + \sum_{(u,v) \in E} f(u, v) - \sum_{(v, w) \in E} f(v, w)$. A flow is \emph{feasible} if all balances are zero. A necessary (but not sufficient) condition for the existence of a feasible flow is that the imports sum to zero, since for any flow the sum of balances equals the sum of imports. We assume henceforth that the sum of imports is indeed zero; that is, the sum of supplies equals the sum of demands. We further assume that the sum of supplies is one; if not, we can scale the imports and the arc capacities to make this true.\\

The problem of finding a feasible flow in a strongly connected network arises as a small part of the fast maximum flow method of Goldberg and Rao\cite{goldberg1998bfd}.  They gave the following linear-time algorithm to find a feasible flow if all arc capacities are at least two: \emph{``We choose an arbitrary vertex \ldots as \ldots root. Then we form an in-tree and an out-tree \ldots . We route all of the positive balances} [supplies] \emph{to the root using the in-tree, and we route the resulting flow excess from the root to the negative balances} [demands] \emph{using the out-tree.''} Goldberg and Rao also remarked that Knuth's\cite{knuth1974www} wheels-within-wheels decomposition of a strongly connected graph implies the existence of a feasible flow if all arc capacities are at least one, and that this flow can be computed in almost-linear time by building the wheels-within-wheels decomposition using a fast disjoint set data structure\cite{321884}. We note that this computation can be made linear-time by using depth-first search\cite{tarjan:146} and a linear-time disjoint set data structure\cite{gabow1983lta} to build the wheels-within-wheels decomposition.\\  

Being able to find a feasible flow for arc capacities at least one instead of two gives a constant-factor improvement in the worst-case time bound of the Goldberg-Rao maximum flow method, but as Goldberg and Rao observed, the wheels-within-wheels approach is more complicated than the simple expedient of routing flow through an in-tree and an out-tree. Erlebach and Hagerup\cite{erlebach2002rft} partially addressed this issue by giving a slightly complicated but linear-time algorithm for arc capacities at least one that does not use a disjoint set data structure but instead relies on properties of depth-first search exploited by Tarjan\cite{tarjan:146} in his strong components algorithm. We give a modification of the original linear-time Goldberg-Rao algorithm that works for arc capacities at least one. Our algorithm and its proof are simpler than those of Erlebach and Hagerup. To develop our algorithm, we first describe the Goldberg-Rao algorithm in more detail, then point out its capacity bottleneck and finally provide a way to overcome it. The Goldberg-Rao algorithm consists of the following three steps:\\

\noindent 1. Choose an arbitrary vertex $r$ as root. Find an in-tree $T$ rooted at $r$ containing all vertices with a supply and an out-tree $U$ rooted at $r$ containing all vertices with a demand. Initialize all arc flows to zero.\\

\noindent 2. Ignoring demands, move the supplies forward toward $r$ along the arcs of $T$.  Specifically, initialize the current supply $s(v)$ of each vertex $v$ in $T$ to be $\min\{0, b(v)\}$, and then process the nonroot vertices of $T$ in leaf-to-root order.  To process a vertex $v$ with parent $w$ in $T$, add $s(v)$ to $f(v, w)$ and to $s(w)$.\\

\noindent 3. Move the demands backward toward $r$ along the arcs of $U$. Specifically, initialize the current demand $d(v)$ of each vertex $v$ in $U$ to be $\min\{0, -b(v)\}$, and then process the nonroot vertices of $U$ in leaf-to-root order.  To process a vertex $w$ with parent $v$ in $U$, add $d(w)$ to $f(v, w)$ and to $d(v)$.\\

The flow increase on an arc is at most one (the sum of supplies) in Step 2 and at most one (the sum of demands) in Step 3, so the final flow satisfies the capacity constraints if every arc has capacity at least two. The only arcs requiring capacity more than one are those in both $T$ and $U$. If $(v, w)$ is such an arc, Step 3 increases the flow on $(v, w)$ by the final value of the current demand of $w$. We denote this value by $D(w)$; it is the sum of (original) demands of all descendants of $w$ in $U$. To prevent Step 3 from making $f(v,w)$ exceed $c(v, w)$, we restrict Step 2 to increase $f(v, w)$ by only up to $1 - D(v) \leq c(v, w) - D(w)$. This may cause Step 2 to leave residual supplies at vertices in $U$ other than $r$, but Step 3 (slightly modified) cancels these supplies, as we shall show. Here is the resulting version of the Goldberg-Rao algorithm, with the modifications to the original algorithm in {\bfseries boldface}:\\

\noindent 1$'$. Choose an arbitrary vertex $r$ as root. Find an in-tree $T$ rooted at $r$ containing all vertices with a supply and an out-tree $U$ rooted at $r$ containing all vertices with a demand. Initialize all arc flows to zero. {\bfseries For each nonroot vertex $\mathbf{v}$, compute $\mathbf{D(v)}$, the sum of demands of descendants of $\mathbf{v}$ in $\mathbf{U}$; if $\mathbf{v}$ is not in $\mathbf{U}$, $\mathbf{D(v) = 0}$.}\\

\noindent 2$'$. Ignoring demands, move the supplies forward toward $r$ along the arcs of $T$ {\bfseries as far as is safe}. Specifically, initialize the current demand $s(v)$ of each vertex $v$ in $T$ to be $\min\{0, b(v)\}$, and then process the nonroot vertices of $T$ in leaf-to-root order. To process a vertex $v$ with parent $w$ in $T$, {\bfseries let $\mathbf{x}$ = min\{$\mathbf{s(v), 1 - D(v)}$\}}; add $\mathbf{x}$ to $f(v, w)$ and to $s(w)${\bfseries, and subtract $\mathbf{x}$ from $\mathbf{s(v)}$}.\\

\noindent 3$'$. Move the demands backward toward $r$ along the arcs of $U$, {\bfseries canceling residual supplies}. Specifically, initialize the current {\bfseries net} demand $d(v)$ of each vertex $v$ to be $\min\{0, -b(v)\} \mathbf{- s(v)}$, and then process the nonroot vertices of $U$ in leaf-to-root order. To process a vertex $w$ with parent $v$ in $U$, add $d(w)$ to $f(v, w)$ and to $d(v)$.\\

\begin{lemma} Let $x$ be any vertex in $U$. Step 2$'$ maintains the invariant $S(x) = \sum_{v \in P} s(v) \leq D(x)$, where $P = \{v\ \vert\ v$ is a descendant of $x$ in $U$ already processed in Step 2$'$\}. 
\end{lemma}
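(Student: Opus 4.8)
The plan is to prove, by induction on the number of vertices already processed in Step 2$'$, a conjunction of three statements: that $s(v)\ge 0$ for every vertex $v$ of $T$, that $\sum_{v\in T} s(v)=1$, and that $S(x)\le D(x)$ for every vertex $x$ of $U$ (so that the invariant holds throughout the execution of Step 2$'$). The middle statement is essentially free: each processing step moves an amount from $s(v)$ to $s(w)$ with both $v$ and $w$ in $T$, and before processing begins $\sum_{v\in T}s(v)$ is the total supply, which is $1$ because $T$ contains every vertex with a supply. The first statement is also easy: the amount moved when processing $v$, which I will call $a:=\min\{s(v),\,1-D(v)\}$ (renaming to avoid the clash with the vertex $x$ of the lemma), lies between $0$ and $s(v)$, since $1-D(v)\ge 0$ as $D(v)$ cannot exceed the total demand $1$; hence $s(v)$ stays nonnegative and $s(w)$ only grows. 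All the work is in the third statement.

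The base case is immediate: $P$ is empty for every $x$ before any processing, so $S(x)=0\le D(x)$. For the inductive step I would examine the step that processes a vertex $v$ with $T$-parent $w$, moving the amount $a$ from $s(v)$ to $s(w)$. The key bookkeeping point is that, because Step 2$'$ processes $T$ in leaf-to-root order, $w$ has not yet been processed and so belongs to no set $P$; therefore the only effects that matter for any $S(x)$ are that $v$ joins $P$ for exactly those $x$ of which $v$ is a descendant in $U$ (taking each vertex to be a descendant of itself), and that the value recorded for this newly added $v$ is its post-processing value $s(v)-a=\max\{0,\,s(v)-1+D(v)\}$. So for any $x$ of which $v$ is not a descendant, $S(x)$ is unchanged and the invariant persists; for any $x$ of which $v$ is a descendant, $S(x)$ increases by exactly $\max\{0,\,s(v)-1+D(v)\}$.

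To bound that increase I would split on whether $s(v)\le 1-D(v)$: if so the increase is $0$ and there is nothing to prove, so suppose $s(v)>1-D(v)$, making the increase $s(v)-1+D(v)$. Here is the one inequality that drives the proof: $\{v\}\cup P$ is a set of distinct vertices of $T$, so by the first two invariants the sum of their $s$-values is at most $\sum_{u\in T}s(u)=1$, that is, the pre-processing value of $S(x)$ satisfies $s(v)+S(x)\le 1$. Adding the increase gives a new value of $S(x)$ at most $(1-s(v))+(s(v)-1+D(v))=D(v)$, and since the descendants of $v$ in $U$ form a subset of the descendants of $x$ in $U$ we get $D(v)\le D(x)$, which closes the induction.

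The main obstacle, I expect, is getting the bookkeeping of the inductive step exactly right: one must see that the increase of $s(w)$ plays no role (because $w$, being the $T$-parent of the vertex just processed, has not been processed and lies in no $P$), so that the only net change to $S(x)$ is the residual supply $\max\{0,s(v)-1+D(v)\}$ stranded at $v$. Once that is isolated, the bound $s(v)+S(x)\le 1$ is just non-negativity plus conservation of the $s$-values, and $D(v)\le D(x)$ is immediate from the nesting of $U$-subtrees, so no computation is left.
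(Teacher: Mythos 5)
Your proof is correct and follows essentially the same route as the paper's: induction over the processing order, isolating the residual $s'(v)=\max\{0,\,s(v)-1+D(v)\}$ as the only change to $S(x)$, bounding it via $s(v)+S(x)\le 1$, and finishing with $D(v)\le D(x)$. The only difference is that you state explicitly the auxiliary invariants (nonnegativity and conservation of the $s$-values) that the paper uses implicitly when it asserts $s(v)\le 1-S(x)$, which is a reasonable elaboration rather than a different argument.
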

\begin{proof}
Before any vertices are processed in Step 2$'$, $S(x) = 0 \leq D(x)$. Suppose the invariant holds just before Step 2$'$ processes $v$ with parent $w$ in $T$. Let unprimed and primed variables denote values just before and just after the processing of $v$.  If $S'(x) = S(x)$, then the invariant holds after the processing of $v$.  Otherwise, $s'(v) = S'(x) - S(x) > 0$. In this case $s(v) \leq 1 - S(x)$, which implies that $s'(v) \leq (1 - S(x)) - (1 - D(v)) = D(v) - S(x)$. Hence $S'(x) \leq D(v) \leq D(x)$; that is, the invariant holds after the processing of $v$. Thus Step 2$'$ maintains the invariant.
\end{proof}

\begin{corollary} When a vertex $w$ is processed in Step 3$'$, $d(w) \geq 0$.
\end{corollary}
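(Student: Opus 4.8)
The plan is to pin down the exact value of $d(w)$ at the instant $w$ is processed in Step 3$'$ and then feed it into the Lemma. First I would use the fact that Step 3$'$ handles the vertices of $U$ in leaf-to-root order: when $w$ is reached, every proper descendant of $w$ in $U$ has already been processed, and the updates that have touched $d(w)$ are exactly the additions performed by the children of $w$ in $U$ when they were processed. Unrolling this one level at a time --- an induction on the height of $w$ in $U$ --- shows that the current value of $d(w)$ equals the sum, over all descendants $u$ of $w$ in $U$ ($w$ itself included), of the value initially assigned to $d(u)$ in Step 3$'$. This step is pure bookkeeping and uses nothing about the signs of the intermediate $d$-values, so no circularity arises.

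Next I would unpack that initial value: the Step 3$'$ initialization sets $d(u)$ to the demand of $u$ minus the residual supply $s(u)$ left at $u$ by Step 2$'$. Summing over the descendants of $w$ in $U$ therefore gives $D(w) - \sum_u s(u)$, because $D(w)$ is by definition the total demand of precisely those vertices. Since Step 3$'$ starts only after Step 2$'$ has finished processing every vertex, the sum $\sum_u s(u)$ is exactly $S(w)$ in the notation of the Lemma, taken with $x = w$ and $P$ equal to the full set of descendants of $w$ in $U$. Hence $d(w)$, at the moment $w$ is processed, equals $D(w) - S(w)$.

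Finally, applying the Lemma with $x = w$ --- which is legitimate because every vertex processed in Step 3$'$ belongs to $U$ --- yields $S(w) \le D(w)$, and therefore $d(w) = D(w) - S(w) \ge 0$. I expect the only point needing care to be the first paragraph: checking that the telescoping is set up correctly and, in particular, that the $s$-values used in the Step 3$'$ initialization are the final post-Step-2$'$ residuals, so that they really do coincide with the terms summed by the Lemma's $S(w)$. Once that identification is made, the corollary drops out immediately.
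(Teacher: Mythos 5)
Your proof is correct and takes essentially the same route as the paper's: establish that $d(w) = D(w) - S(w)$ at the moment $w$ is processed in Step 3$'$, then invoke the Lemma with $x = w$ to conclude $d(w) \geq 0$. The paper simply asserts this identity in one line, while you spell out the leaf-to-root telescoping and the identification of $\sum_u s(u)$ with $S(w)$ that justify it.
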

\begin{proof}
Just after Step 2$'$, $S(w) \leq D(w)$ by Lemma 1. When $w$ is processed in Step 3$'$, $d(w) = D(w) - S(w)$.
\end{proof}

Each flow change in Step 2$'$ is an increase, and by Corollary 1 this is also true for each flow change in Step 3$'$. The increase of $f(v,w)$ during Step 2$'$ is at most $1 - D(v)$ and at most $D(w) \leq D(v)$ in Step 3$'$, summing to at most one. Thus the final $f$ is a flow if the arc capacities are at least one. Step 2$'$ leaves residual supplies only on vertices of $U$; after Step 3$'$, there is no residual supply or demand on any nonroot vertex. Thus the final flow is feasible. We conclude that the algorithm correctly finds a feasible flow on a strongly connected network with arc capacities of at least one. It is easy to implement the algorithm to run in linear time.

% The Appendices part is started with the command \appendix;
% appendix sections are then done as normal sections
% \appendix

% \section{}
% \label{}

% Bibliographic references with the natbib package:
% Parenthetical: \citep{Bai92} produces (Bailyn 1992).
% Textual: \citet{Bai95} produces Bailyn et al. (1995).
% An affix and part of a reference:
%   \citep[e.g.][Ch. 2]{Bar76}
%   produces (e.g. Barnes et al. 1976, Ch. 2).

\end{document}